\def\ps@pprintTitle{%
 \let\@oddhead\@empty
 \let\@evenhead\@empty
 \def\@oddfoot{\centerline{\thepage}}%
 \let\@evenfoot\@oddfoot}
\newtheorem{theorem}{Theorem}[section]
\newtheorem{lemma}[theorem]{Lemma}
\newtheorem{definition}[theorem]{Definition}
\newif \ifhidecomment
\journal{Discrete Mathematics}
\begin{document}

\begin{frontmatter}

\title{Finding $(s,d)$-Hypernetworks in F-Hypergraphs is NP-Hard}

%% Group authors per affiliation:

\author[1]{Reynaldo Gil Pons}
% \ead{reynaldo.gilpons@uni.lu}
\author[2,4]{Max Ward}
% \ead{max.ward-graham@adelaide.edu.au}
\author[3]{Loïc Miller}
% \ead{loicmiller@unistra.fr}

% \cortext[cor1]{Corresponding author}

\address[1]{University of Luxembourg, 6 Avenue de la Fonte, L-4364 Esch-sur-Alzette, Luxembourg}
\address[2]{University of Adelaide, Waite Road, SA 5005 Adelaide, Australia}
\address[3]{University of Strasbourg, 4 Rue Blaise Pascal 67081 Strasbourg,
France}
\address[4]{Harvard University, 52 Oxford Street, Cambridge, MA 02138, USA}

\begin{abstract}

We consider the problem of computing an $(s,d)$-hypernetwork in an acyclic F-hypergraph.
This is a fundamental computational problem arising in directed hypergraphs, and is a foundational step in tackling problems of reachability and redundancy.
This problem was previously explored in the context of general directed hypergraphs (containing cycles), where it is NP-hard, and acyclic B-hypergraphs, where a linear time algorithm can be achieved.
In a surprising contrast, we find that for acyclic F-hypergraphs the problem is NP-hard, which also implies the problem is hard in BF-hypergraphs.
This is a striking complexity boundary given that F-hypergraphs and B-hypergraphs would at first seem to be symmetrical to one another.
We provide the proof of complexity and explain why there is a fundamental asymmetry between the two classes of directed hypergraphs.
\end{abstract}

\begin{keyword}
directed hypergraphs \sep hypernetworks \sep complexity \sep NP-completeness
%\MSC[2010] 00-01\sep  99-00
\end{keyword}

\end{frontmatter}

% \linenumbers

\section{Introduction}\label{sec:introduction}

% Hypergraph applications
%TODO Add more applications, reformulate sentence
Directed hypergraphs are widely studied and have many applications~\cite{gallo1993directed,ausiello2017directed,ausielloyz1992optimal,ausiello2001directed,thakur2009linear,cambini1997flows}. 
Our focus is computing $(s,d)$-hypernetworks in directed hypergraphs, which are fundamental for answering questions about reachability and redundancy~\cite{pretolani2013finding,volpentesta2008hypernetworks}. We begin by providing necessary definitions related to directed hypergraphs.

\subsection{Hypergraphs}\label{subsec:hypergraphs}

Hypergraphs are a generalization of classical graphs where sets of elements are connected by a single hyperedge.
In this paper, we are concerned with directed hypergraphs and thus refer to directed hypergraphs simply as hypergraphs.

Each edge in a hypergraph is a directed set-to-set mapping, where the source set is called the \textit{tail} of the edge and the target set is called the \textit{head} of the edge.
A hypergraph can be defined as follows:
\begin{definition}[Hypergraph]
A directed hypergraph is a pair $\mathcal{H} = (V, E)$, where $V = \{v_1, v_2, \ldots, v_n\}$ is the set of hypervertices, and $E = \{e_1, e_2, \ldots, e_m\}$ is the set of hyperedges.
\end{definition}

Note that we use vertex and hypervertex when describing a hypergraph interchangeably. Likewise, edge and hyperedge are used interchangeably.

\begin{definition}[Hyperedge]
A directed hyperedge is an ordered pair $e = (T_e, H_e)$, where $T_e \subseteq V$ is the tail of $e$ and $H_e \subseteq V \setminus T_e$ is its head.
$T_e$ and $H_e$ may be empty.
\end{definition}

\begin{figure}
     \centering
     \begin{subfigure}[b]{0.45\textwidth}
         \centering
         \includegraphics[width=0.7\textwidth]{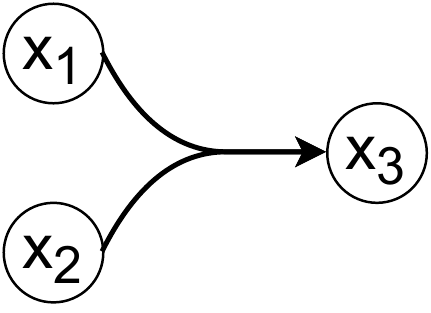}
         \caption{A backward hyperedge (B-edge).}
         \label{fig:b-edge}
     \end{subfigure}%
     \hfill
     \begin{subfigure}[b]{0.45\textwidth}
         \centering
         \includegraphics[width=0.7\textwidth]{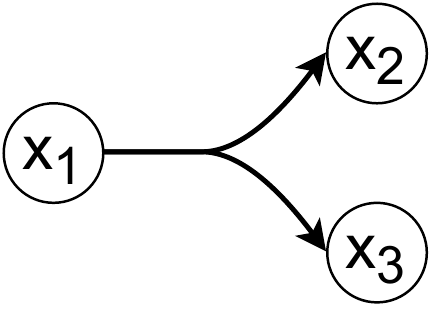}
         \caption{A forward hyperedge (F-edge).}
         \label{fig:f-edge}
     \end{subfigure}
     \hfill
        \caption{Types of hyperedges in a hypergraph}
        \label{fig:hg-edges}
\end{figure}

Gallo \textit{et al.}~\cite{gallo1993directed} defines two different types of hyperedge.
A \textit{backward} hyperedge (B-edge) is a hyperedge $e = (T_e, H_e)$, with $|H_e| = 1$.
A \textit{forward} hyperedge (F-edge) is a hyperedge $e = (T_e, H_e)$, with $|T_e| = 1$.
Figure~\ref{fig:hg-edges} represents both a B-edge (Fig.~\ref{fig:b-edge}) and an F-edge (Fig.~\ref{fig:f-edge}).

It follows that a B-hypergraph is a hypergraph whose hyperedges are B-edges.
Similarly, an F-hypergraph is a hypergraph whose hyperedges are F-edges.
A BF-hypergraph is a hypergraph whose hyperedges are either B-edges or F-edges.

\begin{definition}[Path]
A path $P_{s,d}$ from $s$ to $d$ in a hypergraph $\mathcal{H}$ is a sequence $P_{st} = (v_1 = s, e_1, v_2, e_2, \ldots, e_q, v_{q+1} = d)$, where $s \in T_{e_1}$, $d \in H_{e_q}$ and $v_i \in \{H_{e_{i-1}}\} \cap T_{e_i}, i = 2, \ldots, q$.
\end{definition}

\begin{definition}[Acyclic Hypergraph]
A hypergraph $\mathcal{H}$ is \textit{acyclic} is there exists no path $P_{s,d}$ such that $s=d$.
\end{definition}

Several (almost) equivalent definitions of directed hyperpaths have been published~\cite{gallo1993directed,ausiello1998hypergraph,ausiello2005partially, thakur2009linear}.
The definition we give here is based on that of Ausiello \textit{et al.}~\cite{ausiello2005partially}.
Since we are interested in directed hyperpaths on directed hypergraphs, when we say hyperpath, it should be assumed that it is a directed hyperpath.

\begin{definition}[Subhypergraph]
A hypergraph $\mathcal{H}'=(V',E')$ is a subhypergraph of a hypergraph $\mathcal{H}=(V,E)$ if $V' \subseteq V$, $E' \subseteq E$ and for any edge $e \in E'$ it holds that $H(e) \subseteq V'$ and $T(e) \subseteq V'$.
\end{definition}

\begin{definition}[Hyperpath]
Let $\mathcal{H} = (V,E)$ be a directed hypergraph and let $s,d \in V$.
A hyperpath from $s$ to $d$ in $\mathcal{H}$ is a minimal subhypergraph (where minimality is with respect to deletion of vertices and edges) $\mathcal{H'} \subseteq \mathcal{H}$.
Further, the hyperedges comprising $\mathcal{H'}$ can be ordered in a sequence $\langle e_1, e_2, \dots, e_k \rangle$ such that for every edge $e_i \in \mathcal{H'}$ it is the case that $T(e_i) \subseteq \{s\} \cup H(e_1) \cup H(e_2) \cup \dots \cup H(e_{i-1})$ and $d \in H(e_k)$.
\label{def:hyperpath}
\end{definition}

\subsection{Hypernetworks}\label{sec:hypernetworks}

With the hypergraph definitions given in the preceding section, we can describe hypernetworks.

Hypernetworks were introduced by Volpentesta~\cite{volpentesta2008hypernetworks} who presents two types of hypernetwork, the $s$-hypernetwork and the $(s,d)$-hypernetwork.
Informally, the $s$-hypernetwork is the set of all elements in hyperpaths from a node $s$ to any node, and the $(s,d)$-hypernetwork is the set of all elements in hyperpaths from a node $s$ to a node $d$.

\begin{definition}[$(s,d)$-hypernetwork]
Consider a hypergraph $\mathcal{H}=(V,E)$.
Let $s, d \in V$ and let $\Pi_{s,d}$ be the set of hyperpaths from $s$ to $d$ in $\mathcal{H}$.

The $(s,d)$-hypernetwork in $\mathcal{H}$ is defined as the subhypergraph $\mathcal{H}_{s,d} = (V', E')$, where $E' = \underset{(\mathcal{V}, \mathcal{E}) \in \Pi_{(s,d)}}{\bigcup} \mathcal{E}$ and $V' = \underset{(\mathcal{V}, \mathcal{E}) \in \Pi_{(s,d)}}{\bigcup} \mathcal{V}$.
\label{def:s-d-hypernetwork}
\end{definition}

\begin{definition}[$s$-hypernetwork]
The $s$-hypernetwork in $\mathcal{H}=(V,E)$ is $\mathcal{H}_{s} = \underset{x \in V}{\bigcup} \mathcal{H}_{s,x}$.
\label{def:s-hypernetwork}
\end{definition}

Volpentesta describes polynomial algorithms for finding $s$-hypernetworks and for finding $(s,d)$-hypernetworks in acyclic B-hypergraphs. 
Note that computing an $(s,d)$-hypernetwork is NP-hard if cycles are permitted~\cite{volpentesta2008hypernetworks}.
Later, Pretolani \cite{pretolani2013finding} clarified the findings and terminology of Volpentesta and provided a linear time solution to finding $(s,d)$-hypernetworks in acyclic B-hypergraphs.

We explore the complexity of finding $(s,d)$-hypernetworks in acyclic F-hypergraphs.
It may surprise the reader to see that we prove the problem becomes NP-hard, despite that fact that F-hypergraphs are symmetric to B-hypergraphs.

\section{New Acyclic F-Hypergraph Results}

We show that computing an $(s,d)$-hypernetwork is NP-hard by reduction from the problem of forcing a particular edge in a hyperpath. The forced hyperpath edge problem is a new problem we introduce and prove is NP-complete.

\begin{definition}[$(s,d)$-Hypernetwork Problem]
The \textsc{$(s,d)$-Hypernetwork Problem} (SDHP) is to find the $(s,d)$-hypernetwork $\mathcal{H}_{s,d}$ given a hypergraph $\mathcal{H}$.
\end{definition}

\begin{definition}[Forced Hyperpath Edge Problem]
The \textsc{Forced Hyperpath Edge Problem} (FHEP) is a decision problem in which we decide if there exists any hyperpath $\mathcal{H}'=(V',E')$ in a hypergraph $\mathcal{H}$ between $s$ and $d$ that must contain an edge $e \in E'$.
\end{definition}

\begin{theorem}
\label{thm:fhep-and-sdhp}
If the FHEP is NP-complete, then the SDHP is NP-hard.
\end{theorem}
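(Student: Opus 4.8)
The plan is to exhibit a polynomial-time (Turing) reduction from FHEP to SDHP, so that a polynomial-time algorithm for computing $(s,d)$-hypernetworks would immediately yield one for the NP-complete FHEP; this is exactly what it means for SDHP to be NP-hard. Since SDHP is a construction problem rather than a decision problem, I would phrase the argument in the oracle/Turing-reduction style: assume an algorithm $A$ that solves SDHP, and use it as a subroutine to decide arbitrary FHEP instances.

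First I would record the structural fact that couples the two problems. By Definition~\ref{def:s-d-hypernetwork}, the edge set of the hypernetwork $\mathcal{H}_{s,d}=(V',E')$ is precisely $E' = \bigcup_{(\mathcal{V},\mathcal{E}) \in \Pi_{s,d}} \mathcal{E}$, the union of the edge sets of all hyperpaths from $s$ to $d$. Hence, for any edge $e$ of $\mathcal{H}$, membership $e \in E'$ holds if and only if $e$ occurs in at least one hyperpath from $s$ to $d$ — which is exactly a YES-instance of FHEP on $(\mathcal{H}, s, d, e)$.

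Given this equivalence, the reduction is immediate. On input an FHEP instance $(\mathcal{H}, s, d, e)$, I would (i) run $A$ on the SDHP instance $(\mathcal{H}, s, d)$ to obtain $\mathcal{H}_{s,d} = (V', E')$, and (ii) answer YES precisely when $e \in E'$. The SDHP instance handed to $A$ is just $\mathcal{H}$ together with the given terminals, so forming it costs no more than copying the input, and the final test is a single set-membership check; thus the reduction is polynomial apart from the one call to $A$. Consequently, a polynomial-time SDHP algorithm decides FHEP in polynomial time.

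I would then conclude: because FHEP is NP-complete by hypothesis, the existence of a polynomial-time SDHP algorithm would place an NP-complete problem in $\mathrm{P}$, forcing $\mathrm{P}=\mathrm{NP}$. Therefore SDHP is NP-hard. I do not anticipate a genuine obstacle here; the entire argument hinges on the structural identity $E' = \bigcup_{(\mathcal{V},\mathcal{E}) \in \Pi_{s,d}} \mathcal{E}$ from the definition of the hypernetwork, and the only point demanding a little care is the non-decision nature of SDHP, which is why the reduction is stated as a polynomial-time Turing reduction rather than a many-one reduction. The substantive difficulty of the paper instead lies in the separately established claim that FHEP is itself NP-complete, which this theorem takes as its antecedent.
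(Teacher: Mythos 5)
Your proposal is correct and follows essentially the same route as the paper: both reduce FHEP to SDHP by observing that, by Definition~\ref{def:s-d-hypernetwork}, an edge lies in $\mathcal{H}_{s,d}$ exactly when some hyperpath from $s$ to $d$ uses it, so a single SDHP computation plus a membership test decides FHEP. Your explicit framing as a polynomial-time Turing (oracle) reduction is simply a more careful articulation of what the paper's ``suppose we could solve the SDHP'' argument does implicitly.
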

\begin{proof}
We can reduce any FHEP instance to a SDHP instance in polynomial time. The FHEP is defined between a source $s$ and destination $d$ on a hypergraph $\mathcal{H}$. Suppose we could solve the SDHP between $s$ and $d$ on $\mathcal{H}$. Any edge in the $(s,d)$-hypernetwork is forcible in the FHEP sense because there must be a hyperpath using it. Conversely, any edge not in the $(s,d)$-hypernetwork cannot be forcible in the FHEP sense because there must be no hyperpath using it. Thus, the FHEP is reducible to the SDHP.

If the FHEP is NP-complete, the SDHP must be NP-hard via our reduction.
\end{proof}

Next, we prove the FHEP is NP-complete.
\Cref{thm:fhep-and-sdhp} implies that the SDHP is therefore NP-hard.

\subsection{The Forced Hypergraph Edge Problem is NP-complete}\label{sec:fhep-np-complete}
Our proof involves a reduction from 3-SAT, which is the Boolean satisfiability problem restricted to exactly 3 variables per clause.
It is widely known to be NP-complete~\cite{cook1971complexity}.
The reduction is achieved by taking an instance of the 3-SAT problem and constructing a corresponding acyclic F-hypergraph such that any found hyperpath with a forced edge would imply a solution to the 3-SAT problem.

Assume a 3-SAT instance with variables $v_1, \ldots, v_n$ and clauses $c_1, \ldots, c_m$.
Our corresponding acyclic F-hypergraph construction contains a vertex and a pair of edges for each variable, and three vertices and three edges for each clause.

We start with an initial vertex $p_0$ which is the source of our hyperpath.
We also create a vertex $q_0$ that will be used to force the single must-use edge of our hyperpath, and a node $f$ which is the target of the hyperpath we need to find.
For each variable $v_i$, we create a node $p_i$.
Then for each clause $c_i$, we create three nodes $q_{i,1}, q_{i,2}, q_{i,3}$ which correspond respectively to the three literals of each clause.

A variable $v_i$ appears in its positive form in a set of $a$ clauses $x_1, x_2, \dots x_a$.
For each such clause, $v_i$ appears as either the first, second, or third literal, denoted by $y_1 \in \{1,2,3 \}, y_2 \in \{1,2,3 \}, \dots, y_a \in \{1,2,3 \}$.
A variable $v_i$ appears in negated form ($\lnot v_i$) in $b$ clauses. Call the corresponding clauses and literals $x'_1, x'_2, \dots x'_b$ and $y'_1 \in \{1,2,3 \}, y'_2 \in \{1,2,3 \}, \dots, y'_b \in \{1,2,3 \}$ respectively.

In our construction, for each variable $v_i$ we have a vertex $p_i$ and two hypergraph edges. The first edge is of the form $(\{p_{i-1} \},\{ p_i, q_{x_1, y_1}, q_{x_2, y_2}, \dots, q_{x_a, y_a} \})$, and it corresponds to assigning variable $v_i$ to false, thus blocking the clauses in the edge. The second edge is of the form $(\{p_{i-1} \},\{ p_i, q_{x'_1, y'_1}, q_{x'_2, y'_2}, \dots, q_{x'_b, y'_b} \})$, and it corresponds to assigning variable $v_i$ to true.

For each clause $c_i$ we construct three vertices $q_{i,1}, q_{i,2}, q_{i,3}$ and three edges. Each of the three edges corresponds to one of the three vertices. The $j$-th such edge has the form $(\{q_{i,j}\}, \{q_{i+1,1},q_{i+1,2},q_{i+1,3} \})$. Since $c_m$ is the last clause we construct it differently. We have $(\{q_{m,j}\}, \{ f \})$ for $j \in \{1, 2, 3 \}$.

Now, we must connect the $p$ vertices to the $q$ vertices. This is done by adding edges $(\{p_n\}, \{q_0\})$ and $(\{q_0\}, \{q_{1,1}, q_{1,2}, q_{1,3}\})$. Note that $(\{p_n\}, \{q_0\})$ is particularly important since it is the forced edge. We aim to find a hyperpath from $p_0$ to $f$ that must include $(\{p_n\}, \{q_0\})$.

\begin{figure}
\centering
\includegraphics[width=\textwidth]{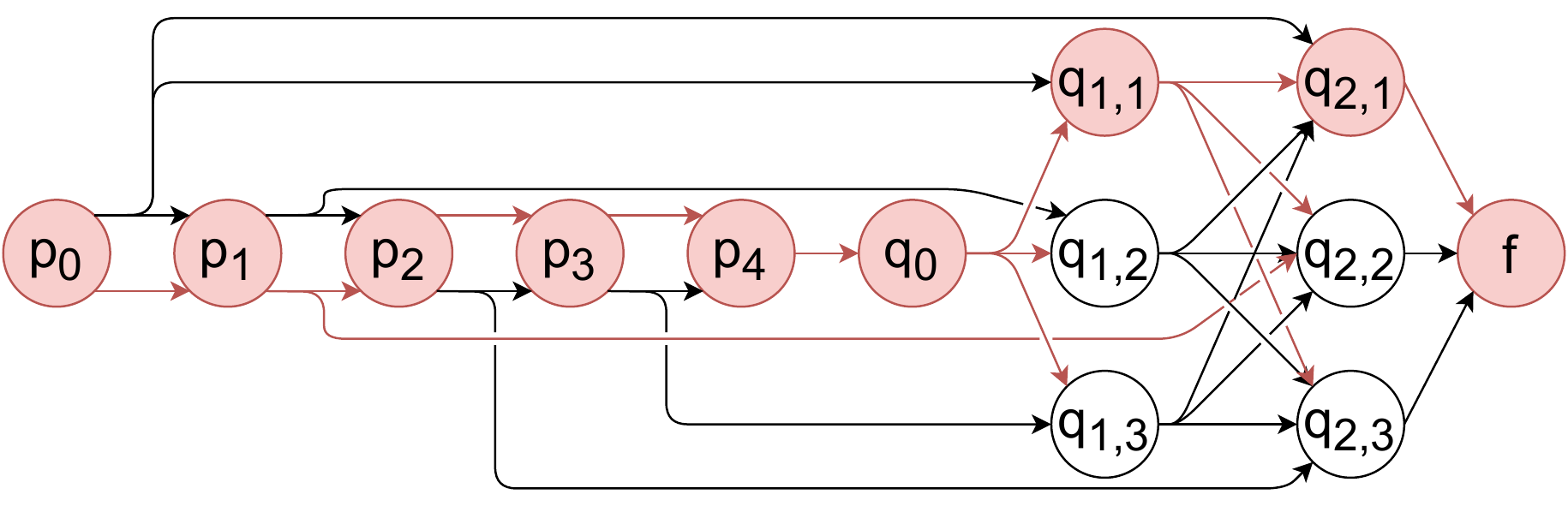}
\caption{An example of our acyclic F-hypergraph construction, corresponding to the following 3-SAT instance: $(v_1 \lor v_2 \lor \lnot v_4) \land (v_1 \lor \lnot v_2 \lor \lnot v_3)$. Red hyperedges and nodes indicate a valid hyperpath with the forced edge $(\{p_4\}, \{q_0\})$.}
\label{fig:3-sat-construction}
\end{figure}

To illustrate our construction, we consider the following 3-SAT instance: $(v_1 \lor v_2 \lor \lnot v_4) \land (v_1 \lor \lnot v_2 \lor \lnot v_3)$.
The F-hypergraph construction stemming from this instance can be found in Figure~\ref{fig:3-sat-construction}.
This instance is composed of four variables, $v_1, v_2, v_3, v_4$ and two clauses, which correspond respectively to vertices $p_1, p_2, p_3, p_4$ for the variables, vertices $q_{1,1}, q_{1,2}, q_{1,3}$ for clause 1, and vertices $q_{2,1}, q_{2,2}, q_{2,3}$ for clause 2.

Let us call our hypergraph construction $C(\phi)$ for a 3-SAT instance $\phi$.
To prove that finding a hyperpath in $C(\phi)$ with the forced edge $(\{p_n\}, \{q_0\})$ is equivalent to a valid variable assignment in $\phi$ we must first prove some lemmas.

\begin{lemma}
\label{thm:sat_acyclic_f_hypergraph}
$C(\phi)$ is an acyclic F-hypergraph.
\end{lemma}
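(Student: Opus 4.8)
The statement bundles two claims: that $C(\phi)$ is an F-hypergraph and that it is acyclic. The first is immediate from the construction, and I would dispatch it first. The plan is to inspect each family of edges introduced in the reduction — the two variable edges $(\{p_{i-1}\}, \cdot)$, the clause-propagation edges $(\{q_{i,j}\}, \cdot)$ together with the terminal edges $(\{q_{m,j}\}, \{f\})$, and the two connector edges $(\{p_n\}, \{q_0\})$ and $(\{q_0\}, \cdot)$ — and to observe that in every case the tail is a singleton. Since an F-edge is precisely a hyperedge with $|T_e| = 1$, every edge of $C(\phi)$ is an F-edge, so $C(\phi)$ is an F-hypergraph.

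For acyclicity, the approach I would take is to exhibit a topological ranking: a function $r$ on the vertex set that strictly increases across every edge, meaning $r(t) < r(w)$ for the unique tail vertex $t$ and every head vertex $w$. The layering suggested by the construction is to place the variable spine first, then the forcing vertex, then the clause layers in order, and finally the sink. Concretely I would set $r(p_i) = i$ for $0 \le i \le n$, then $r(q_0) = n+1$, then $r(q_{i,j}) = n+1+i$ for every clause index $i$ and every $j \in \{1,2,3\}$, and finally $r(f) = n+m+2$.

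The core of the proof is to verify monotonicity edge by edge. The connector and propagation edges are routine: $(\{p_n\},\{q_0\})$ steps from rank $n$ to $n+1$; $(\{q_0\}, \{q_{1,1}, q_{1,2}, q_{1,3}\})$ steps from $n+1$ to $n+2$; each $(\{q_{i,j}\}, \{q_{i+1,1}, q_{i+1,2}, q_{i+1,3}\})$ steps from $n+1+i$ to $n+2+i$; and each $(\{q_{m,j}\}, \{f\})$ steps from $n+1+m$ to $n+m+2$. The only edges worth genuine attention are the variable edges, since they are the ones that jump from the variable spine directly into the clause layers: the edge for $v_i$ has tail $p_{i-1}$ with $r(p_{i-1}) = i-1 \le n-1$, and heads consisting of $p_i$ together with clause vertices $q_{x_k,y_k}$ of rank at least $n+2$, so $i-1 < i$ and $i-1 < n+2$ both hold. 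This is the step I expect to be the main (though minor) obstacle — not because it is difficult, but because one must confirm that no clause vertex ever receives a rank below a variable pointing into it, which is guaranteed precisely because every $q$ vertex is ranked strictly above every $p$ vertex.

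Having shown that $r$ strictly increases across every edge, I would conclude acyclicity as follows. Along any path $(v_1, e_1, v_2, \ldots, e_q, v_{q+1})$ the definition forces $v_i \in T_{e_i}$ and $v_{i+1} \in H_{e_i}$, hence $r(v_i) < r(v_{i+1})$ at each step and therefore $r(v_1) < r(v_{q+1})$ whenever the path uses at least one edge. In particular $v_1 \neq v_{q+1}$, so no path can satisfy $s = d$, and by the definition of an acyclic hypergraph $C(\phi)$ is acyclic. Combining the two parts, $C(\phi)$ is an acyclic F-hypergraph.
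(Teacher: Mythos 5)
Your proof is correct and takes essentially the same approach as the paper: the F-hypergraph property follows from every tail being a singleton, and acyclicity follows from the layered structure of the construction. Your explicit rank function simply formalizes what the paper asserts in a single sentence (``each edge progresses from the previous layer to the next''), so the two arguments coincide in substance, with yours being the more rigorous rendering.
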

\begin{proof}
Each edge in $C(\phi)$ has a tail containing a single vertex and is therefore an F-edge.
It follows from the definition of F-hypergraph that $C(\phi)$ is an F-hypergraph as it contains only F-edges.

Each edge in $C(\phi)$ progresses from the previous layer to the next. Therefore it is acyclic.
\end{proof}

\begin{lemma}
A hyperpath $\mathcal{H}'$ from $s$ to $d$ in an F-hypergraph $\mathcal{H}$ contains exactly one path $P_{s,d}$.
\label{thm:sat_proof_one_path}
\end{lemma}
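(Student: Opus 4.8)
The plan is to prove the statement in two directions: that at least one path $P_{s,d}$ exists inside $\mathcal{H}'$, and that at most one does. Two structural facts drive the argument. First, by \Cref{def:hyperpath} the hyperpath $\mathcal{H}'=(V',E')$ is \emph{minimal} (under edge and vertex deletion) among subhypergraphs admitting a valid edge ordering. Second, because $\mathcal{H}$ is an F-hypergraph every edge has a single-vertex tail, so in any path $P_{s,d}=(v_1=s,e_1,\dots,e_q,v_{q+1}=d)$ the tail of $e_i$ is forced to equal the vertex $v_i$. I will combine minimality (to kill branching) with acyclicity (to pin down ordering).

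For existence I would exploit the ordering $\langle e_1,\dots,e_k\rangle$ guaranteed by \Cref{def:hyperpath}. Since $d\in H(e_k)$ and the single tail of $e_k$ lies in $\{s\}\cup H(e_1)\cup\cdots\cup H(e_{k-1})$, I walk backwards: if that tail is $s$ the path is $(s,e_k,d)$; otherwise it lies in $H(e_j)$ for some $j<k$, and I recurse on $e_j$. The index strictly decreases each step, so the process terminates and, read in reverse, yields a path $P_{s,d}$.

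Uniqueness is where the real work lies. I first claim every path inside $\mathcal{H}'$ must use all of $E'$. Indeed, the edges of such a path, taken in path order, themselves satisfy the ordering condition of \Cref{def:hyperpath} (each tail $v_i\in H(e_{i-1})$, and $d$ in the last head), so they span a subhypergraph of $\mathcal{H}'$ with the hyperpath property; if this were proper it would contradict minimality, forcing the edge set of every path to equal $E'$. Next I invoke acyclicity: within one path the vertices $v_1,\dots,v_{q+1}$ are distinct (a repeat would give a path from a vertex to itself), so the single tails $t_{e_i}=v_i$ are distinct; since one path already lists all of $E'$, every edge of $E'$ has a distinct tail. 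In particular there is a unique edge with tail $s$, which must open every path.

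The crux is then to rule out a second path. Fixing the path $P=(e_1,\dots,e_q)$ from the existence step, suppose another path $P'$ existed; as both use exactly $E'$ and tails are distinct, $P'$ is a reordering of $e_1,\dots,e_q$ opening with the unique tail-$s$ edge $e_1$. Let $i$ be the first position where $P'$ deviates, placing some $e_m$ with $m>i$; its tail then lies in $H(e_{i-1})$. Splicing yields the path $e_1,\dots,e_{i-1},e_m,e_{m+1},\dots,e_q$, valid because $t_{e_m}\in H(e_{i-1})$ and the remaining adjacencies are inherited from $P$, yet its edge set omits $e_i,\dots,e_{m-1}$ — contradicting that every path uses all of $E'$. (A deviation to an \emph{earlier} edge is impossible since its tail lying in $H(e_{i-1})$ would close a cycle.) I expect this splicing step — simultaneously excluding ``shortcut'' edges via minimality and ``back'' edges via acyclicity — to be the main obstacle, since it is exactly here that the asymmetry with B-hypergraphs (multiple tails, hence branching that minimality cannot flatten) would break the argument.
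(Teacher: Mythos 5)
Your proof is correct for the setting in which the lemma is actually used, but it takes a genuinely different route from the paper's. The paper runs a single backward recursion driven entirely by minimality: there is exactly one edge $e$ with $d \in H(e)$ (otherwise truncating the hyperpath ordering at an earlier such edge would contradict minimality), its tail is a single vertex $v$ because $\mathcal{H}$ is an F-hypergraph, and one deletes $d$ and $e$ and recurses on $v$; the forced choice at every step yields existence and uniqueness simultaneously, and in fact pins down the whole structure of $\mathcal{H}'$ (a chain of edges plus head-only vertices). Your proof splits the two directions: existence by walking the ordering of \Cref{def:hyperpath} backwards (essentially the same walk, powered by the ordering rather than by minimality), and uniqueness by a global exchange argument resting on your intermediate claim that \emph{every} $s$--$d$ path in $\mathcal{H}'$ must use all of $E'$ --- a clean structural fact of independent interest, correctly derived from minimality --- combined with distinctness of tails and the splicing step. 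What your route buys is that it isolates exactly where minimality enters (killing proper sub-orderings) and where acyclicity enters (killing repeated vertices and backward deviations); what the paper's route buys is brevity and no need to reason about permutations and first deviation points.

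One point deserves attention: your uniqueness argument invokes acyclicity of $\mathcal{H}$, which the lemma as stated does not grant --- it speaks of arbitrary F-hypergraphs. This is not a defect of your proof but of the lemma's wording: without acyclicity the statement is false. Consider $E = \{(\{s\},\{a\}),\ (\{a\},\{d,s\})\}$. This F-hypergraph is itself a minimal hyperpath from $s$ to $d$, yet it contains infinitely many paths $P_{s,d}$, since the path definition permits re-entering $s$ through the head of the second edge and looping. The paper's own proof silently suffers from the same issue (its closing claim that the removed elements form ``the only'' path fails on this example), and the lemma is only ever applied to the acyclic construction $C(\phi)$, where both proofs are sound. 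A related minor caveat in your write-up: the claim that the edges of a path ``taken in path order'' satisfy the ordering condition tacitly assumes the path does not repeat edges; that assumption again comes from acyclicity, so it should be invoked before, not after, your all-of-$E'$ claim.
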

\begin{proof}
We say that a path is contained in a hyperpath if the path is comprised of only vertices and edges in the hyperpath.

There is only one edge $e$ in $\mathcal{H}'$ such that $d \in H(e)$, because a hyperpath is minimal. Since $\mathcal{H}$ is an F-hypergraph, the tail of $e$ contains a single vertex $T(e)=\{v\}$.
We can apply the same reasoning to $v$ recursively.
That is, remove $d$ and $e$ from $\mathcal{H}'$ and treat $v$ as the target, then repeat.
Eventually, we will remove all the vertices from $\mathcal{H}'$ except $s$.
The vertices and edges we removed must be the only path from $s$ to $d$.
\end{proof}

\begin{lemma}
\label{thm:sat_one_edge_per_thing}
A hyperpath $\mathcal{H}$ in $C(\phi)$ that uses the edge $(\{p_n\}, \{q_0\})$ will contain contain exactly one edge $e_{v_i} \in \mathcal{H}$ for each variable in $v_i \in \phi$ and exactly one edge $e_{c_i} \in \mathcal{H}$ for each clause $c_i \in \phi$.
\end{lemma}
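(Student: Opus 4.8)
The plan is to combine the two structural facts already available for the construction. By \Cref{thm:sat_acyclic_f_hypergraph}, $C(\phi)$ is an acyclic F-hypergraph, so \Cref{thm:sat_proof_one_path} applies to any hyperpath from $p_0$ to $f$: such a hyperpath $\mathcal{H}$ contains a unique path $P$ from $p_0$ to $f$. First I would strengthen this to the observation that \emph{every} edge of $\mathcal{H}$ lies on $P$. Indeed, if some edge $e$ were off $P$, then $P$ already reaches $f$ using only its own edges, so deleting $e$ together with any head vertices that only $e$ supports would leave a strictly smaller subhypergraph still admitting the ordering of \Cref{def:hyperpath} and still reaching $f$; this contradicts minimality. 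Consequently, counting the edges of $\mathcal{H}$ is the same as counting the edges traversed by $P$, and the whole argument reduces to tracing $P$. The crucial device is to use the forced edge $(\{p_n\},\{q_0\})$, which lies on $P$ by hypothesis, as an anchor and to read off the two halves of $P$ on either side of it.

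For the variable side I would trace $P$ \emph{backward} from the tail $p_n$ of the forced edge. The only edges of $C(\phi)$ whose head contains $p_n$ are the two edges built for $v_n$, so the edge of $P$ entering $p_n$ is one of them; since $P$ is a simple path in an acyclic hypergraph, $p_n$ has a single in-edge, which yields exactly one $v_n$-edge, whose tail is $p_{n-1}$. Repeating this step gives, for each $i=n,n-1,\dots,1$, exactly one edge $e_{v_i}$ (the unique producer of $p_i$), and terminates at $p_0=s$. Thus the portion of $P$ preceding the forced edge is precisely the chain $p_0\to p_1\to\dots\to p_n$, contributing exactly one edge per variable.

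For the clause side I would trace $P$ \emph{forward} from the head $q_0$ of the forced edge. The only edge with $q_0$ in its tail is $(\{q_0\},\{q_{1,1},q_{1,2},q_{1,3}\})$, so $P$ continues to some $q_{1,j_1}$; the only edge with $q_{1,j_1}$ in its tail is the corresponding clause-$1$ edge, which leads to some $q_{2,j_2}$, and so on. Iterating, $P$ is driven through exactly one clause-$i$ edge for each $i=1,\dots,m$, finishing with an edge $(\{q_{m,j_m}\},\{f\})$ that reaches $f$. Hence the suffix of $P$ contributes exactly one edge per clause, and together with the variable side this establishes the statement.

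The step I expect to be the main obstacle is pinning down the clause edges, because the $q$ vertices are not ``clean'': each $q_{i,j}$ can be placed into the accumulated heads either by the preceding clause edge \emph{or} by a variable edge, which is exactly the ``blocking'' mechanism of the construction. A backward trace from $f$ is therefore ambiguous, since the in-edge of a $q$ vertex need not be a clause edge. The resolution is precisely to anchor at the forced edge and trace \emph{forward} through the $q$-chain, where determinism does hold: each $q_{i,j}$ and $q_0$ has a unique \emph{out}-edge, so no branching can occur. The only remaining care is to invoke minimality (through \Cref{thm:sat_proof_one_path} and the ``every edge lies on $P$'' observation) to rule out spurious off-path copies of a variable or clause edge hiding in $\mathcal{H}$.
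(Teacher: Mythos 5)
Your proof is correct in substance and follows essentially the same route as the paper's: invoke \Cref{thm:sat_proof_one_path} to get the unique path $P$ from $p_0$ to $f$, anchor at the forced edge $(\{p_n\},\{q_0\})$, and walk the layered construction to read off exactly one edge per variable and one per clause, with minimality excluding any further edges. You are in fact more explicit than the paper, which compresses the exclusion of extra edges into the phrase ``because hyperpaths are minimal with respect to edge deletion.''

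One step needs repair, though. To show that every edge of $\mathcal{H}$ lies on $P$, you delete a \emph{single} off-path edge $e$ (plus the vertices only it supports) and claim the result still admits the ordering of \Cref{def:hyperpath}. As stated, that can fail: under the contradiction hypothesis $\mathcal{H}$ may contain several off-path edges with dependencies among them, and another off-path edge $e'$ may have its tail supplied only by $H(e)$. In $C(\phi)$ this configuration is exactly conceivable --- an off-path variable edge supplying a blocking vertex $q_{x,y}$ that is the tail of an off-path clause edge --- so after removing $e$ alone the remaining edge set need not have any valid ordering, and no contradiction with minimality is reached. The clean fix is to avoid one-at-a-time deletion: let $E_P$ be the set of edges of $P$, ordered along $P$. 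This ordering witnesses the property of \Cref{def:hyperpath}, since each tail is either $\{p_0\}$ or the single vertex supplied by the head of the previous path edge, and $f$ lies in the last head. Hence the subhypergraph consisting of $E_P$ together with all heads and tails of its edges is a subhypergraph of $\mathcal{H}$ satisfying the defining property of a hyperpath from $p_0$ to $f$, and minimality of $\mathcal{H}$ forces $\mathcal{H}$ to equal it; in particular every edge of $\mathcal{H}$ is on $P$. With this substitution, the rest of your argument (backward determinism along the $p$-chain, forward determinism from $q_0$ through the $q$-chain, which correctly sidesteps the ambiguity of tracing backward from $f$) goes through as written.
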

\begin{proof}
\Cref{thm:sat_proof_one_path} implies that $\mathcal{H}$ contains a single path from $p_0$ to $f$. Since we force the edge $(\{p_n\}, \{q_0\})$, there must also be a single path from $p_0$ to $q_0$.
The only way to get to $q_0$ is through $p_0 \dots p_n$.
For any $p_i$ where $i<n$ there are two edges to choose from, and we must pick exactly one of them on the path to $q_0$ because hyperpaths are minimal with respect to edge deletion.
By construction, $p_i$ corresponds to $v_i$, and we must pick a single edge $e_{v_i}$ from $p_{i-1}$ to $p_i$. Therefore, each $v_i$ has a single edge in $\mathcal{H}$.

Similarly, the only way to get from $q_0$ to the target, $f$, is through the $q$ vertices.
Since $\mathcal{H}$ contains a single path from $p_0$ to $f$, and since we force the edge $(\{p_n\}, \{q_0\})$, there must be a single path from $q_0$ to $f$.
The three vertices $q_{i,1}, q_{i,2}, q_{i,3}$ only have edges to $q_{i+1}$ or $f$.
Therefore, we must pass through exactly one $q_i$ vertex for each $i= 1 \dots m$.
By construction, the three $q_i$ vertices correspond to the clause $c_i$ and we must pick a single edge $e_{c_i}$ from $q_{i-1}$ to $q_i$.
Therefore, each $c_i$ has a single edge in $\mathcal{H}$.
\end{proof}

%\begin{lemma}
%\label{thm:sat_assignments}
%Following from \Cref{thm:sat_one_edge_per_thing}, each $e_{v_i}$ corresponds to either a true or false %assignment to $v_i$ that satisfies $\phi$ if and only if there is a solution to $\phi$.
%\end{lemma}
\begin{lemma}
\label{thm:sat_assignments}
A hyperpath $\mathcal{H}$ in $C(\phi)$ corresponds to a variable assignment that satisfies $\phi$.
\end{lemma}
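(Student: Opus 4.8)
The plan is to read this lemma as the forward half of the reduction's correctness: from a hyperpath $\mathcal{H}$ that uses the forced edge $(\{p_n\},\{q_0\})$ we extract an assignment and show it satisfies $\phi$. First I would invoke \Cref{thm:sat_one_edge_per_thing} to fix the structure of $\mathcal{H}$: it contains exactly one variable edge per $v_i$, and these choices define an assignment $A$ (the ``false'' edge sets $v_i$ to false, the ``true'' edge sets it to true); it also contains exactly one clause edge per clause, and by \Cref{thm:sat_proof_one_path} these assemble into a single chain $q_0 \to q_{1,j_1} \to q_{2,j_2} \to \cdots \to q_{m,j_m} \to f$. The essential bookkeeping to record up front is the meaning of ``blocking'': by construction a vertex $q_{i,j}$ lies in the head of a chosen variable edge of $\mathcal{H}$ precisely when the $j$-th literal of clause $c_i$ is \emph{false} under $A$ (one checks both polarities against the two variable edges). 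Hence clause $c_i$ is satisfied by $A$ if and only if at least one of $q_{i,1},q_{i,2},q_{i,3}$ is \emph{not} produced by any variable edge of $\mathcal{H}$.

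The heart of the argument is a minimality contradiction. I would assume, for contradiction, that $A$ leaves some clause $c_i$ unsatisfied; then all three of $q_{i,1},q_{i,2},q_{i,3}$ are blocked, so in particular the vertex $q_{i,j_i}$ used by the clause chain is already available through a variable edge, with no help from $q_0$. I would then exhibit a strictly smaller subhypergraph $\mathcal{H}'$ that still connects $p_0$ to $f$: keep every variable edge of $\mathcal{H}$ together with the portion of the clause chain running from $q_{i,j_i}$ onward to $f$, and discard the forced edge $(\{p_n\},\{q_0\})$, the vertex $q_0$, and the initial chain edges (the edge out of $q_0$ and the clause edges of $c_1,\dots,c_{i-1}$). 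Ordering the variable edges first ($v_1,\dots,v_n$) and the retained clause edges afterward, every tail is covered: the $p$-chain feeds itself, the variable edges make $q_{i,j_i}$ available through blocking, each subsequent clause edge is fed by its predecessor, and $f$ is in the head of the last edge. Since $\mathcal{H}'\subsetneq\mathcal{H}$ still admits the ordering of \Cref{def:hyperpath}, $\mathcal{H}$ was not a minimal connecting subhypergraph, contradicting that it is a hyperpath. Therefore no clause is left unsatisfied and $A$ satisfies $\phi$.

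I expect the main obstacle to be the careful verification that $\mathcal{H}'$ is a genuine, \emph{properly contained} subhypergraph meeting \Cref{def:hyperpath}: one must confirm the subhypergraph closure condition (the heads and tails of every retained edge remain inside the retained vertex set, even though some blocked vertices of earlier layers now sit as unused ``dangling'' heads), exhibit the explicit valid ordering above, and verify that excising the entire $q_0$-side of the chain really does leave $q_{i,j_i}$ reachable purely through a variable edge. The other delicate point, on which the whole collapse into non-minimality rests, is stating the blocking correspondence in exactly the right direction, namely ``$q_{i,j}$ blocked $\Leftrightarrow$ literal $j$ of $c_i$ is false''; getting the polarity of the two variable edges straight is what turns an unsatisfied clause into three simultaneously blocked vertices and hence into a removable shortcut.
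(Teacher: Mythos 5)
Your proposal is correct, and it follows the paper's overall skeleton --- extract the assignment and the clause chain via \Cref{thm:sat_one_edge_per_thing} and \Cref{thm:sat_proof_one_path}, then argue that each literal chosen by the chain must be true --- but it differs in how the crucial step is justified, and the difference is in your favor. The paper handles that step informally: it asserts that the chosen clause edge $e_{c_i}$ ``corresponds to deciding which of the three literals satisfies $c_i$'' and concludes that each clause ``would be satisfied by the literal $e_{c_i}$,'' without ever explaining what prevents a hyperpath from riding the chain through a falsified literal. Your minimality contradiction supplies exactly that missing mechanism: if some clause $c_i$ were unsatisfied, all of $q_{i,1}, q_{i,2}, q_{i,3}$ would be blocked (your polarity bookkeeping --- the ``false'' edge covers positive occurrences, the ``true'' edge covers negated ones --- is the right reading of the construction), and then the variable edges together with the chain suffix from $q_{i,j_i}$ to $f$ form a proper subhypergraph of $\mathcal{H}$ that still admits a valid ordering in the sense of \Cref{def:hyperpath} once the forced edge, $q_0$, the edge out of $q_0$, and the chain prefix are discarded; this contradicts the minimality required of a hyperpath, so no clause can be unsatisfied. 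Your verification details also check out: the retained vertex set keeps every head and tail of retained edges (dangling blocked vertices are harmless), the explicit ordering (variable edges $v_1,\dots,v_n$, then clause edges $c_i,\dots,c_m$) covers every tail, and the subhypergraph is strictly smaller because the forced edge is gone. In short, your argument proves what the paper only asserts; the paper's version is shorter but leaves the load-bearing step --- why a hyperpath through the forced edge cannot use a blocked chain vertex --- to the reader's intuition about the construction.
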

\begin{proof}
\Cref{thm:sat_one_edge_per_thing} says we pick a single edge $e_{v_i}$ for each variable $v_i$.
By construction, $C(\phi)$ contains two edges for each $v_i$.
Exactly one of these edges corresponds to assigning $v_i$ to true, and is connected to all the clauses where $v_i$ appears in negated $\lnot v_i$ form.
The other of these edges corresponds to assigning $v_i$ to false, and is connected to all the clauses where $v_i$ appears in positive form.

If we pick the edge that corresponds to assigning $v_i$ to true, then all the clauses in which $\lnot v_i$ appears cannot be satisfied by our assignment to $v_i$.
They must be satisfied by at least one of the other two literals in the clause. Similarly, if we assign $v_i$ to false, it cannot satisfy clauses where $v_i$ appears in positive form.

Each edge $e_{c_i} \in \mathcal{H}$ is the single edge we pick for a clause $c_i$ (\Cref{thm:sat_one_edge_per_thing}).
This edge corresponds to deciding which of the three literals satisfies $c_i$.
If there is some literal that satisfies a clause under a variable assignment, then there is some valid choice for $e_{c_i}$.

If there existed a valid hyperpath solution in $C(\phi)$, then we could assign the variables in $\phi$ the corresponding values to $e_{v_i}$ for each $v_i$.
Then, each clause $c_i$ would be satisfied by the literal $e_{c_i}$.
This would be a variable assignment that satisfies $\phi$.
\end{proof}

\begin{lemma}
\label{thm:if-and-only-if}
A hyperpath $\mathcal{H}$ in $C(\phi)$ that corresponds to a variable assignment that satisfies $\phi$ exists \textbf{if and only if} there is a solution to $\phi$.
\end{lemma}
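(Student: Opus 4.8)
The statement is an equivalence whose forward direction is essentially already established: any hyperpath that (by hypothesis) corresponds to a variable assignment satisfying $\phi$ exhibits such an assignment directly, so $\phi$ has a solution---this is the content of \Cref{thm:sat_assignments}. The forward-looking work is therefore the converse. Starting from an arbitrary satisfying assignment $A$ of $\phi$, I would construct an explicit hyperpath of $C(\phi)$ that uses the forced edge $(\{p_n\},\{q_0\})$ and whose induced assignment is $A$, so that its mere existence witnesses the left-hand side.

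I would assemble the subhypergraph layer by layer. For each variable $v_i$ include the single edge leaving $p_{i-1}$ prescribed by $A(v_i)$; concatenating these produces the segment $p_0 \to p_1 \to \dots \to p_n$ and, as a by-product, reaches exactly the clause-literal vertices $q_{c,y}$ whose literal is \emph{false} under $A$. I would then append the forced edge $(\{p_n\},\{q_0\})$ together with $(\{q_0\},\{q_{1,1},q_{1,2},q_{1,3}\})$, and finally thread the clause chain to $f$: because $A$ satisfies clause $c_i$, at least one literal sits at some position $j_i$ that is \emph{true}, and I would take the edge leaving $q_{i,j_i}$. Ordering the edges as the variable edges, then the forced edge, then the $q_0$ edge, then the chain edges in clause order satisfies the tail-inclusion requirement of \Cref{def:hyperpath} at every step, and $f$ lies in the head of the last edge, so the subhypergraph reaches $f$ from $p_0$.

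The crux is minimality, since a hyperpath must be a minimal subhypergraph; in particular I must rule out that the forced edge is a deletable appendage dangling off the otherwise-useless vertex $q_0$. This is precisely where routing the chain through true literals is essential: the variable edges reach only false-literal vertices, so each chosen chain vertex $q_{i,j_i}$ is reached \emph{solely} through the clause chain. Consequently $(\{q_0\},\{q_{1,1},q_{1,2},q_{1,3}\})$ is the only way to reach $q_{1,j_1}$, each chain edge is the only way to reach the next $q_{i+1,j_{i+1}}$, and hence $q_0$---and with it the forced edge---is indispensable for reaching $f$. Each variable edge is likewise the unique edge reaching its $p_i$, and the extra false-literal vertices pulled into the vertex set cannot be removed without deleting a needed variable edge. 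Thus no vertex or edge is superfluous, and the construction is a genuine hyperpath containing the forced edge.

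With the subhypergraph in hand, \Cref{thm:sat_one_edge_per_thing,thm:sat_assignments} confirm that it induces precisely $A$, completing the converse and hence the equivalence. I expect this minimality argument to be the main obstacle: it is the only place the proof must preclude a spurious shortcut to $f$ that would bypass $q_0$, whereas the reachability ordering and the forward direction follow routinely from the earlier lemmas.
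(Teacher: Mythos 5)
Your proposal is correct and follows essentially the same route as the paper: choose the variable edges dictated by the satisfying assignment, append the forced edge $(\{p_n\},\{q_0\})$ and the $q_0$ edge, and thread the clause chain through \emph{true} literals, so that the chain cannot be bypassed. If anything, your write-up is more rigorous than the paper's, which phrases the converse as a short contradiction and never explicitly verifies the edge-ordering condition or minimality---precisely the points you correctly identify as the crux of the argument.
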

\begin{proof}
\Cref{thm:sat_assignments} shows that if we have a solution to $C(\phi)$, then we have a solution to $\phi$. We now show that if there is a solution to $\phi$, then there is a solution to $C(\phi)$.

Suppose there is a variable assignment satisfying $\phi$ that has no valid hyperpath solution in $C(\phi)$.
Since variables are assigned a value, it still holds that we have exactly one edge $e_{v_i} \in \mathcal{H}$ for each variable $v_i \in \phi$, that corresponds to a true or false assignment of the variable.
A variable assignment satisfying $\phi$ that has no valid hyperpath solution in $C(\phi)$ then means that for at least one clause $c_i$, there is no valid choice for $e_{c_i}$ as the literal which satisfies the clause.

However, if the variable assignment satisfies $\phi$, then there must be at least one literal per clause which we can use to satisfy the clause.
For each clause, we can therefore pick the edge that corresponds to the literal we use to satisfy the clause to create a valid hyperpath solution.
This contradicts our initial supposition.
%If there is no valid hyperpath solution in $C(\phi)$, then there cannot be a variable assignment that satisfies $\phi$.
%If there were a way to satisfy $\phi$, then we could choose the corresponding edges in $C(\phi)$.
\end{proof}

\begin{theorem}
\label{thm:sat_f_hypergraph_fhep}
The FHEP in acyclic F-hypergraphs is NP-complete.
\end{theorem}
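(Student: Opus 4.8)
The plan is to prove NP-completeness in two parts: membership in NP and NP-hardness.

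For NP-membership, I would argue that given a candidate subhypergraph $\mathcal{H}'$ together with a proposed ordering of its edges, one can verify in polynomial time that $\mathcal{H}'$ is a valid hyperpath from $s$ to $d$ (checking the tail-containment conditions of \Cref{def:hyperpath} and minimality) and that the forced edge $e$ belongs to $E'$. Since the witness has size polynomial in the input, FHEP is in NP.

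For NP-hardness, the approach is to invoke the reduction from 3-SAT already developed in this section. Given a 3-SAT instance $\phi$, the construction $C(\phi)$ is built in polynomial time, and by \Cref{thm:sat_acyclic_f_hypergraph} it is an acyclic F-hypergraph. The FHEP instance asks whether there is a hyperpath from $p_0$ to $f$ in $C(\phi)$ that must use the forced edge $(\{p_n\}, \{q_0\})$. By \Cref{thm:if-and-only-if}, such a hyperpath exists if and only if $\phi$ is satisfiable. Hence the reduction is correct: a polynomial-time algorithm for FHEP would decide 3-SAT, establishing that FHEP is NP-hard.

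Combining the two parts gives that FHEP in acyclic F-hypergraphs is NP-complete. The main obstacle is not in the hardness direction, since all the heavy lifting is done by the preceding lemmas; rather, it is making the correctness statement precise. In particular, I would take care to phrase the forced-edge condition correctly: the construction guarantees that \emph{every} $p_0$-to-$f$ hyperpath routes through $(\{p_n\}, \{q_0\})$, so the ``must contain'' semantics of FHEP reduces to the mere existence of such a hyperpath, which in turn encodes satisfiability of $\phi$. I would also note explicitly that \Cref{thm:fhep-and-sdhp} then yields NP-hardness of SDHP as the intended downstream consequence.
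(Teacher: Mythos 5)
Your core argument matches the paper's proof: invoke \Cref{thm:sat_acyclic_f_hypergraph} for the construction being an acyclic F-hypergraph, \Cref{thm:if-and-only-if} for the correctness of the reduction, note that $C(\phi)$ is built in polynomial time, and conclude hardness from the NP-completeness of 3-SAT. Your explicit NP-membership argument is a welcome addition; the paper's proof omits it entirely, even though it is needed for NP-completeness rather than mere NP-hardness (and it does go through: a certificate consisting of the subhypergraph plus an edge ordering can be verified in polynomial time, including minimality, since reachability-by-ordering can be rechecked after deleting each single vertex or edge).

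However, your final paragraph contains a genuine error. You claim that ``the construction guarantees that \emph{every} $p_0$-to-$f$ hyperpath routes through $(\{p_n\}, \{q_0\})$,'' and you use this to replace the forced-edge condition by mere existence of a $p_0$-to-$f$ hyperpath. This is false. The variable edges have clause vertices in their heads: the edge for assigning $v_i$ false is $(\{p_{i-1}\},\{p_i, q_{x_1,y_1},\dots,q_{x_a,y_a}\})$. So, for instance, in the paper's example one can take the edge $(\{p_0\},\{p_1,q_{1,1},q_{2,1}\})$ followed by $(\{q_{2,1}\},\{f\})$; this is a minimal subhypergraph admitting a valid edge ordering from $p_0$ to $f$, hence a hyperpath, and it avoids $(\{p_n\},\{q_0\})$ entirely. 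Indeed, such ``shortcut'' hyperpaths exist in $C(\phi)$ regardless of whether $\phi$ is satisfiable, so if FHEP were interpreted as asking for the existence of \emph{any} $p_0$-to-$f$ hyperpath, the answer would always be yes and the reduction would be vacuous. The correct semantics (the one the paper uses, as clarified in the proof of \Cref{thm:fhep-and-sdhp}) is: does there exist a hyperpath from $s$ to $d$ that \emph{contains} the designated edge $e$? It is precisely the existence of a hyperpath containing $(\{p_n\},\{q_0\})$ --- not the existence of a hyperpath per se --- that \Cref{thm:if-and-only-if} ties to satisfiability of $\phi$. Your proof survives if you simply delete the erroneous clarification and rely on \Cref{thm:if-and-only-if} as stated, but as written the paragraph substitutes a condition under which the reduction fails.
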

\begin{proof}
Our construction $C(\phi)$ is an acyclic F-hypergraph (\Cref{thm:sat_acyclic_f_hypergraph}).

\Cref{thm:if-and-only-if} implies a hyperpath $\mathcal{H}$ in $C(\phi)$ from $p_0$ to $f$ that must use $(\{p_n\}, \{q_0\})$ exists if and only if a solution to 3-SAT instance $\phi$ exists. We can construct $C(\phi)$ for any 3-SAT instance $\phi$ in polynomial time, thus 3-SAT is polynomial time reducible to FHEP on an acyclic F-hypergraph.

The 3-SAT problem is NP-complete. Therefore, the FHEP in acyclic F-hypergraphs is NP-complete via our reduction.
\end{proof}

\Cref{thm:sat_f_hypergraph_fhep} and \Cref{thm:fhep-and-sdhp} together imply that the SDHP in acyclic F-hypergraphs is NP-hard. Note that the problem is already known to be NP-hard when cycles are permitted \cite{volpentesta2008hypernetworks}. Thus, SDHP is NP-hard in F-hypergraphs in general.

\section{Final Remarks}\label{sec:final-remarks}

We considered the problem of computing an $(s,d)$-hypernetwork in F-hypergraphs. A linear time solution exists for acyclic B-hypergraphs, and the problem is known to be NP-hard in general when cycles are permitted \cite{pretolani2013finding,volpentesta2008hypernetworks}. We use a reduction from 3-SAT to the forced hypergraph edge problem and thence to the problem of finding an $(s,d)$-hypernetwork to prove the problem NP-hard for acyclic F-hypergraphs.

There are two final remarks to make related to our new hardness proof. First, it implies the problem is also hard for acyclic BF-hypergraphs since they include acyclic F-hypergraphs. This is a new result. Second, one may be surprised that B-hypergraphs and F-hypergraphs have different complexity results. It seems intuitive that they are symmetrical: the directions of each edge could be reflected and $s$ and $d$ could be swapped. However, there is one subtle asymmetry that breaks this construction.

Hyperpaths are minimal with respect to edge deletion. So, in a hyperpath from $s$ to $d$, there is only a single edge $e$ such that $d \in H(e)$. However, they may be more than one edge $s \in T(e')$. This asymmetry means that finding a hyperpath in a reversed F-hypergraph may not be valid. Further, this property is sufficient to make the problem tractable in B-hypergraphs. Future research may find that this property is powerful enough to solve other algorithmic problems on B-hypergraphs.

\bibliography{main}

\end{document}